\newcommand{\mymatrix}[2]{\left( \begin{array}{#1} #2 \end{array} \right)}
\newcommand{\myvector}[1]{\mymatrix{c}{#1}}
\newcommand{\myvectorr}[1]{\mymatrix{r}{#1}}
\newcommand{\cent}[0]{\mbox{\textcent}}
\newcommand{\dollar}{\$}
\begin{document}

\title{Improved constructions for succinct affine automata}
\titlerunning{Improved constructions for succinct affine automata}

\author{Abuzer Yakary{\i}lmaz\inst{1,2}\orcidID{0000-0002-2372-252X} }
\authorrunning{A. Yakary{\i}lmaz}

\institute{
Center for Quantum Computer Science, University of Latvia, R\={\i}ga, Latvia \and
QWorld Association, Tallinn, Estonia, \url{https://qworld.net} \\
\email{abuzer@lu.lv}
}

\maketitle

\begin{abstract}
Affine finite automata (AfA) can be more succinct than probabilistic and quantum finite automata when recognizing some regular languages with bounded-error. In this paper, we improve previously known constructions given for the succinctness of AfAs in three ways. First, we replace some of fixed error bounds with arbitrarily small error bounds. Second, we present new constructions by using less states than the previous constructions. Third, we show that any language recognized by a nondeterministic finite automaton (NFA) is also recognized by bounded-error AfAs having one more state, and so, AfAs inherit all succinct results by NFAs. As a special case, we also show that any language recognized by a NFA is recognized by AfAs with zero error if the number of accepting path(s) for each member is exactly the  same number.
\keywords{succinctness \and state complexity \and affine automata \and quantum automata \and probabilistic automata \and linear systems \and bounded error \and one-sided error \and zero error}
\end{abstract}
 
\section{Introduction}

Probabilistic finite automaton (PFA) \cite{Rab63,Paz71} is a linear system implementing non-negative transitions by preserving $\ell_1$-norm where a probabilistic state is represented as a non-negative real-valued column vector with entry summation 1. Similarly, quantum finite automaton (QFA) \cite{SY14,AY15} is also a linear system but it implements complex-valued transitions by preserving $ \ell_2 $-norm where a (pure) quantum state is represented as a complex-valued vector with length 1. Implementing both positive and negative valued transitions creates interference and so some transitions may disappear, which brings certain computational advantages to QFAs over PFAs, e.g., bounded-error QFAs can be exponentially more succinct than bounded-error PFAs \cite{af98}, or nondeterministic QFAs are more powerful than nondeterministic finite automaton (NFA) \cite{YS10A}. 

One may ask whether it is possible to use interference\footnote{We refer the reader to \cite{Hir18} for certain discussions about interference with historical remarks.} classically. The idea of using negative transition values for classical systems dates back to sixties. Turakainen \cite{Tur69} defined generalized automaton (GA) as a linear system implementing real-valued transitions without any restrictions. The language recognition by GAs are defined based on cutpoints, and bounded-error language recognition has never been considered.

After reading the whole input, the final state of a GA is represented as a column vector with real-valued entries. To calculate the accepting value, this vector is multiplied with a pre-defined real-valued row vector (with the same dimension). In other words, each state contributes to the accepting value by a real-valued weight:
\[
    f_G(x) = w \cdot v_f = ( w_1~~~w_2~~~\cdots ~~~ w_n) \cdot \myvector{\alpha_1 \\ \alpha_2 \\ \vdots \\ \alpha_n}
    = \sum_{i=1}^n w_i \cdot \alpha_i ,
\]
where $ G $ is the GA, $ x $ is the input, $ f_G(x) $ is the accepting value of $ G $ on $ x $, $ w $ is the pre-defined weights, and $ v_f $ is the final state. Remark that for PFAs, $ w $ contains only 0s and as, where 1s are corresponding to the accepting states. 

In the above, $ f_G(x) $ is in $ \mathbb{R} $. On the other hand, an accepting probability is in $ [0,1] $. One way to observe each state with some probabilities when in $ v_f $ (similar to PFAs and QFAs) is making a normalization with respect to $ \ell_1 $-norm. For QFAs, some measurement operators are applied to the quantum state, and then, the different outcomes are observed with some probabilities. For classical case, we define an operator called weighting \cite{DCY16} produces the outcomes with probabilities based on their normalized values in $ \ell_1 $. Here we should remark that, on contrary to measurement operators, weighting is a non-linear operator. 

Affine finite automaton (AfA) is a new and quantum-like generalization of PFA which evolves linearly followed by a non-linear weighting operator \cite{DCY16}. An affine state can have arbitrary real numbers but the summation of them must be 1 similar to the probabilistic state.

The computational power of AfAs and their generalizations have been examined and compared with their probabilistic and quantum counterparts in a series of papers \cite{VY16,BMY17,HMY17,NKPVY17,IKPY18,VY18,HMY19,HMY21,IKPY21,KY21A}. In the unbounded-error and bounded-error language recognition modes, AfAs are more powerful than PFAs and QFAs, where the latter models recognize all and only stochastic and regular languages, respectively \cite{Rab63,Paz71,KW97,YS11A,LQZLWM12}. In the nondeterministic language recognition mode, PFAs can be seen as nondeterministic finite automata (NFAs), and so they recognize all and only regular languages. On the other hand, the class of languages recognized by nondeterministic QFAs is a superset of regular languages known as co-exclusive stochastic languages \cite{Paz71,YS10A}. On contrary to bounded-error and unbounded-error cases, nondeterministic AfAs are equivalent to nondeterministic QFAs \cite{DCY16}. 

Regarding the state complexity, bounded-error PFAs can be exponentially more succinct than deterministic finite automata (DFAs) \cite{Rab63,Paz71,Amb96}; and, bounded-error QFAs can be exponentially more succinctness than bounded-error PFAs \cite{af98,AY15}. On the other hand, bounded-error AfAs can be more succinct than bounded-error PFAs and QFAs and the gap can be super-exponential \cite{BMY17,VY18,IKPY21}. A similar gap between DFAs and bounded-error PFAs and between bounded-error PFAs and bounded-error QFAs can only be obtained on promise problems \cite{AmbY12,GefY15}.

In this paper, we improve previously known constructions given for the succinctness of AfAs in three ways. In the next section, we give the definitions and notations used throughout the paper. In Section~\ref{sec:simulations}, we quickly review the simulations by AfAs that we refer in the rest of paper. In Section~\ref{sec:three-state-AfAs}, we give our constructions for three-state AfAs. In Section~\ref{sec:NFA}, we present our results on NFAs.

 \section{Preliminaries}
 \label{sec:pre} 
 
We assume the reader familiar with the basic of automata theory. Throughout the paper, we denote the input alphabet as $ \Sigma $ not including $ \cent $ (the left end-marker) and $ \dollar $ (the right end-marker), and we denote $ \Sigma \cup \{\cent,\dollar\} $ as $ \Tilde{\Sigma} $. For a given input $ x \in \Sigma $, $ \tilde{x} $ denotes $ \cent x \dollar $. For a given string $ x $, $ |x| $ is its length; for a numeric value $ \alpha $, $ |\alpha| $ is the absolute value of $\alpha$; and, for a vector $ v $, $ |v| $ is $ \ell_1 $-norm of $ v $. For a non-empty string $ x $, $ x[i] $ denotes its $i$-th symbol, where $1 \leq i \leq |x|$. For a given matrix $ A $, $ A[i,j] $ is its entry at the $ i $-th row and $ j $-th column; and, for a given vector $ v $, $ v[i] $ is its $i$-th entry and $ \zeta(v) $ is the summation of all entries.
For an automaton $ M $ and input string $x$, $ f_M(x) $ is the accepting probability of $ M $ on $x$.

An affine state is a real-valued column vector with entry summation 1. An affine operator is a real-valued square matrix where each column is an affine state. If we use only non-negative values, then an affine state is a probabilistic state (also called stochastic vector) and an affine operator is a probabilistic operator (also called stochastic matrix).

An $n$-state affine finite automaton (AfA)\footnote{We use lowercase ``f'' to emphasis non-linear behaviours of the automaton.} $M$ is a 5-tuple
\[
    M = (S,\Sigma,\{ A_\sigma \mid \sigma \in \Tilde{\Sigma} \},s_I,S_a),
\]
where 
\begin{itemize}
    \item $ S = \{s_1,\ldots,s_n\} $ is the set of states,
    \item $ A_\sigma $ is the affine operator when reading symbol $ \sigma \in \Tilde{\Sigma} $,
    \item $ s_I \in S $ is the initial state, and
    \item $ S_a \subseteq S $ is the set of accepting state(s).
\end{itemize}

Let $ x \in \Sigma^* $ be the input with length  $m$. The automaton $ M $ starts in affine state $ v_0 $, which is the elementary basis $ e_I $ in $ \mathbb{R}^n $. If $ x $ is the empty string, then the final state is $ v_f = A_\dollar A_{\cent} v_0 $. Otherwise, the final state is calculated as
\[
    v_f = A_\dollar A_{x[m]}A_{x[m-1]} \cdots A_{x[1]} A_{\cent} v_0.
\]
The input is accept with probability
\[
    f_M(x) = \frac{\sum_{s_i \in S_a} |v_f[i]| }{|v_f|}.
\]

If we use only non-negative transition values, then we obtain a probabilistic finite automaton (PFA). If we use only 0s and 1s, then we obtain a deterministic finite automaton (DFA).

A language $ L \subseteq \Sigma^* $ is said to be recognized by an automaton $ M $ with error bound $ \epsilon < \frac{1}{2} $ if (i) for each $ x \in L $, $ f_M(x) \geq 1 - \epsilon $ and (ii) for each $ x \notin L $, $ f_M(x) \leq \epsilon $.

A language $ L \subseteq \Sigma^* $ is said to be recognized by an automaton $ M $ with positive one-sided error bound $ \epsilon < 1 $ if (i) for each $ x \in L $, $ f_M(x) \geq 1 - \epsilon $ and (ii) for each $ x \notin L $, $ f_M(x) =0 $.

A language $ L \subseteq \Sigma^* $ is said to be recognized by an automaton $ M $ with negative one-sided error bound $ \epsilon < 1 $ if (i) for each $ x \in L $, $ f_M(x) = 1 $ and (ii) for each $ x \notin L $, $ f_M(x) < \epsilon $.

When $ \epsilon = 0 $, then it is called zero error.

We call an automaton ((positive/negative)  one-sided) bounded-error, if it recognizes its language in the specified error mode with some error bounds.

\section{Simulations}
\label{sec:simulations}

We review the basic simulations by AfAs. We start with a generic case.

\subsection{A sequence of matrix-vector multiplication}
\label{sec:linear}

Let $ v_0 $ be a real-valued $ n $-dimensional column vector and let $ A_1,\ldots,A_k $ be some $(n \times n)$-dimensional real-valued linear operators. We define affine vector $ v_0' $ and affine operator $ A_i' $ ($ 1 \leq i \leq k $) as
\[
    v_0' = \myvector{ v_0[1] \\ \vdots \\ v_0[n] \\ \hline 1 - \zeta(v_0) } 
    ~~ \mbox { and } ~~
    A_i' = \mymatrix{ccc|c}{ 
    c_1[0] & \cdots & c_n[0] & 0 \\ 
    \vdots & \ddots & \vdots & \vdots \\
    c_1[n] & \cdots & c_n[n] & 0 \\ 
    \hline 1 - \zeta(c_1) & ~\cdots~ & 1 - \zeta(c_n) & 1 },
\]
where $ c_j $ is the $ j $-th column of $ A_i $, where $ 1 \leq j \leq n $. 
Then, for given $v_f = A_k A_{k-1} \cdots A_1 v_0$, 
we can have
\[
    v_f' =  A_k' A'_{k-1} \cdots A'_1 v'_0 =  \myvector{ v_f[1] \\ \vdots \\ v_f[n] \\ \hline 1 - \zeta(v_f) }.
\]

\subsection{Trivial case for PFAs}
\label{sec:trivial-PFA}

It is trivial that any $n$-state PFA is an  $n$-state AfA. So, PFAs and DFAs cannot be more succinct than bounded-error AfAs.

\subsection{Rational exclusive stochastic languages}
\label{sec:exclusive}

Let $ L $ be a language defined by an $n$-state rational-valued PFA $ P $ and a cutpoint $ \lambda \in [0,1] $ as given below:
\[
    L = \{ w \mid f_P(w) \neq \lambda \}.
\]
Based on the simulation given in Section~\ref{sec:linear}, it was shown \cite{VY18} that $ L $ is recognized by an $ (n+1) $-state integer-valued AfA as follows:
\begin{itemize}
    \item each $ x \in L $ is accepted by the AfA with probability no less than $ \frac{1}{3} $, and,
    \item each $ x \notin L $ is accepted by the AfA with zero probability.
\end{itemize}

\subsection{Exact simulation of QFAs}
\label{sec:exact-QFA}

Any given $ n $-state QFA can be simulated exactly by a $ (n^2+1) $-state AfA \cite{VY18}. The computation of a QFA is linear. By tensoring the computation with itself, the probabilities can be directly accessed on the state vectors. Each complex number can be represented by two real numbers, but the tensoring vectors have some redundancy and so $ n^2 $-dimensional real-valued vectors can be obtained from $ n $-dimentional quantum state. The rest of the proof is due to Section~\ref{sec:linear}. If the QFA is real-valued, we still do not know any better bound.

Potentially bounded-error QFAs can be quadratically more succinct than bounded-error AfAs, but it is open whether QFAs can be more succinct than AfAs or whether any $n$-state QFA can be simulated by a $ \Theta(n) $-state or $ o(n) $-state AfAs.

\section{Three-state AfAs}
\label{sec:three-state-AfAs}

In this section, we give improved constructions of 3-state AfAs for some unary languages.

We start with the well-known counting problem: $ \mathtt{COUNT_m} = \{a^m\} $ for some $ m \geq 0 $. It was shown \cite{VY18} that the language $ \mathtt{COUNT_m} $ is recognized by a 2-state AfA with (negative) one-sided error bound $ \frac{1}{3} $. We decrease the error bound arbitrarily by using one more state.

\begin{theorem}
    The language $ \mathtt{COUNT_m} $ is recognized by a 3-state AfA with (negative) one-sided error bound $ \frac{1}{2t+1} $ for some $ t \in \mathbb{Z}^+ $.
\end{theorem}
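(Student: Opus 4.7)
The plan is to realize a signed integer counter in a 2-dimensional linear system and then apply the lifting of Section~\ref{sec:linear} to obtain a 3-state AfA. Concretely, on input $a^k$ the 2-dimensional computation will carry the state $(1, t(k-m))^T$, starting from $v_0 = (1, -tm)^T$ and applying on each $a$ the shear that fixes the first coordinate and adds $t$ to the second. The salient property is that the second entry vanishes exactly when $k=m$ and otherwise has absolute value at least $t$, so the ``distance from being in the language'' is amplified by the tunable parameter $t$.

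Applying the Section~\ref{sec:linear} construction adds a third coordinate equal to $1 - \zeta(\cdot)$, automatically yielding affine operators whose columns sum to $1$. I will take $A_\cent$ to be the lift of the (affine) map sending the elementary basis vector $(1,0)^T$ to the 2-dimensional $v_0$, $A_a$ to be the lift of the shear above, and $A_\dollar$ to be the identity. Starting from the elementary state $e_1 \in \mathbb{R}^3$, a short induction then shows that after processing $\cent a^k \dollar$ the final AfA state is
\[
v_f \;=\; \bigl(\,1,\;\; t(k-m),\;\; -t(k-m)\,\bigr)^T,
\]
so $|v_f| = 1 + 2t|k-m|$. Declaring $s_1$ to be the unique accepting state (and also the initial state) gives
\[
f_M(a^k) \;=\; \frac{1}{1 + 2t|k-m|}.
\]

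The theorem now follows by two cases: when $k = m$ the denominator is $1$, so $f_M(a^m) = 1$, matching the ``$f_M(x)=1$'' clause of the negative one-sided definition; when $k \neq m$, $|k-m| \geq 1$ forces $f_M(a^k) \leq \frac{1}{2t+1}$, which is the claimed error bound. The only points that need care are choosing the initial offset $-tm$ (so that the counter's zero is aligned with $k=m$) and the balanced $\pm t$ pair in the last two rows (so the affine sum stays $1$ while the non-accepting mass grows like $2t|k-m|$); both are forced once one aims for a $(2t+1)$ denominator. I do not anticipate a real obstacle beyond this bookkeeping, since the affineness of the lifted operators is supplied for free by Section~\ref{sec:linear}.
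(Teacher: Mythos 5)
Your proposal is correct and follows essentially the same route as the paper: a signed counter held in the second and third coordinates (which are negatives of each other), with $s_1$ as the sole accepting state, giving acceptance probability $\frac{1}{1+2t|k-m|}$. The only cosmetic difference is that you fold the amplification factor $t$ into the per-symbol shear and use the identity on $\dollar$, whereas the paper counts by $\pm 1$ per symbol and multiplies by $t$ only at the right end-marker.
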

\begin{proof}
    The affine states are $ s_1 $, $s_2$, and $s_3$, where $ s_1 $ is the initial and only accepting state. The initial affine state is $ v_0= (1~~0~~0)^T $. After reading $ \cent $, the affine state is set to
    \[
        v_1 = \myvectorr{ 1 \\ m \\ -m } = \mymatrix{rrr}{1 & ~~0 & ~~0 \\ m & 1 & 0 \\ -m & 0 & 1} \myvector{1 \\ 0 \\ 0}.
    \]
    For each symbol $ a $, the value of $ e_2 $ (resp., $e_2$) is decreased (resp., increased) by 1 by using the following operator
    \[
        A_a = \mymatrix{rrr}{1 & ~~0 & ~~0 \\ -1 & 1 & 0 \\ 1 & 0 & 1}, \mbox{ i.e., }
        \myvectorr{ 1 \\ t-1 \\ 1-t  } = 
        \mymatrix{rrr}{1 & ~~0 & ~~0 \\ -1 & 1 & 0 \\ 1 & 0 & 1} 
        \myvectorr{ 1 \\ t \\ -t  }.
    \]
    Let $ l $ be the length of the input. Then, the affine state before reading $ \dollar $ is
    \[
        v_{l+1} = \myvector{1 \\ m-l \\ l-m  }.
    \]
    After reading $ \dollar $ symbol, the values of $ e_2 $ and $ e_3 $ are multiplied by $ t $:
    \[
        v_f = \myvector{ 1 \\ t(m-l) \\ t (l-m) } =
        \mymatrix{rrr}{1 & ~-t & ~-t \\ 0 & t & 0 \\ 0 & 0 & t} \myvector{1 \\ m-l \\ l-m  }.
    \]
    If $ l=m $, then $ v_f = v_0 $ and the input is accepted with probability 1. Otherwise, $ |m-l| \geq 1 $, and so, the accepting probability is at most $ \dfrac{1}{2t+1} $.
    \qed
\end{proof}

It is clear that the number of states required by bounded-error PFAs and QFAs recognizing $ \mathtt{COUNT_m} $ increases when $ m $ increases. Similar to AfAs, two-way QFAs can recognize $ \mathtt{COUNT_m} $ with a few states in polynomial expected time in $m$ \cite{YS10B}.

We will continue with language $ \mathtt{MOD}_p = \{ a^{j\cdot p} \mid j \in \mathbb{N} \} $ for some prime number $ p $. This language is recognized by QFAs with $ O(\log p) $ states and bounded-error PFAs require at least $ p $ states \cite{af98}. Previously, the bound for AfA was given by using the simulation given in Section~\ref{sec:exact-QFA} \cite{VY18}. Here, we show that we can indeed use only 3 states.

\begin{theorem}
    The language $ \mathtt{MOD}_p $ is recognized by a 3-state AfA with (negative) one-sided error bound $ \dfrac{ \cot( \pi/p ) }{ t} $ for some $ t>1 $.
\end{theorem}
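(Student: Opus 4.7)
The plan is to apply the generic linear-to-affine lift of Section~\ref{sec:linear} to a $2$-dimensional linear system that implements a rotation by angle $\theta = 2\pi/p$ on the two auxiliary coordinates, with a coordinate amplification happening at the right end-marker. The extra coordinate introduced by the lift plays the role of the third state and absorbs the affine normalization correction, so the whole construction is a $3$-state AfA.

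Concretely, I would take the $2$-dimensional linear system with $v_0 = (1,0)^T$, $A_\cent = I_2$, $A_a$ equal to the planar rotation by $\theta = 2\pi/p$, and $A_\dollar = \mathrm{diag}(1,t)$ for a parameter $t>1$ to be chosen. After reading $\tilde x$ with $|x| = l$, the linear state is $(\cos(l\theta),\, t\sin(l\theta))^T$. By Section~\ref{sec:linear} this lifts to a $3$-state AfA whose final affine state is
\[
    v_f' \;=\; \myvector{\cos(l\theta) \\ t\sin(l\theta) \\ 1 - \cos(l\theta) - t\sin(l\theta)},
\]
and I declare $s_1$ the only accepting state. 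If $p \mid l$ then $v_f' = (1,0,0)^T$ and $f_M(x) = 1$, giving one-sided acceptance with probability one. If $p \nmid l$, writing $l \equiv r \pmod p$ with $r \in \{1,\dots,p-1\}$, I upper-bound the accepting probability by discarding the non-negative terms $|\cos(r\theta)|$ and $|1-\cos(r\theta)-t\sin(r\theta)|$ in the denominator:
\[
    f_M(x) \;\leq\; \frac{|\cos(r\theta)|}{t\,|\sin(r\theta)|} \;=\; \frac{|\cot(r\theta)|}{t}.
\]

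It then remains to maximize $|\cot(r\theta)|$ over $r \in \{1,\dots,p-1\}$. Since $|\cot|$ diverges at both endpoints of $(0,\pi)$ and has period $\pi$, the supremum is attained at the angle $r\theta \bmod \pi$ that is closest to $0$. For odd prime $p$ this is $r = (p\pm 1)/2$, where $r\theta = \pi \mp \pi/p$, giving $|\cot(r\theta)| = \cot(\pi/p)$; all other residues give a strictly smaller $|\cot|$ (in particular the naive candidate $r=1$ gives only $\cot(2\pi/p) < \cot(\pi/p)$). This yields $f_M(x) \leq \cot(\pi/p)/t$ as required.

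The main obstacle I anticipate is psychological rather than technical: identifying the correct worst-case residue. One is tempted to guess $r = 1$, but the actual maximizer is $r = (p\pm 1)/2$, i.e.\ the residue whose rotation lands nearest the negative real axis, not the positive one. The other subtlety is that the statement is vacuous for $p = 2$, since then $\sin(r\theta) = 0$ at $r = 1$ too and the cotangent bound collapses; in any case $\mathtt{MOD}_2$ is handled trivially by a $2$-state DFA, viewed as an AfA via Section~\ref{sec:trivial-PFA}, so one may assume $p$ is an odd prime throughout.
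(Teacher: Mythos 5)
Your proposal is correct and follows essentially the same route as the paper: lift the planar rotation by $\theta = 2\pi/p$ via the Section~\ref{sec:linear} construction, scale the sine coordinate by $t$ at the right end-marker, accept only on $s_1$, and bound the accepting probability by $|\cot(r\theta)|/t$ maximized at $r = (p\pm 1)/2$. If anything, your write-up is slightly more careful than the paper's (which states the worst-case ratio with the inequality reversed), and your observation that the bound is vacuous for $p=2$ is a legitimate caveat the paper does not address.
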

\begin{proof}
    We use the single qubit algorithm given for this problem \cite{af98}. By help of one more state, we will trace the computation by affine states, which also helps us to decrease the accepting probability arbitrarily for the non-members.
    
    Let $ \{s_1,s_2,s_3\} $ be our states and let $ \theta = \frac{2\pi}{p} $ be our rotation angle. We start in affine state $ v_0 = (1~~0~~0)^T $ and we apply the identity operator when reading symbol $ \cent $. Then, for each symbol $a$, we apply the following operator that implements the counter-clockwise rotation with angle $ \theta $ on the unit circle by using $ s1 $ and $s_2$ :
    \[
        A_a = \mymatrix{cc|c}{ \cos \theta & -\sin \theta & 0 \\ \sin \theta & \cos \theta & 0 \\ \hline \alpha_1 & \alpha_2 & 1    },
    \]
    where $ \alpha_1 = 1 - \cos\theta - \sin\theta $ and $ \alpha_2 = 1 + \sin\theta - \cos\theta $.
    
    Let $ l $ be the length of input. Before reading $ dollar $ symbol, the affine state is 
    \[
        v_{l+1} = \myvector{ \cos ( l \theta ) \\ \sin ( l \theta ) \\ 1 - \cos ( l \theta ) -  \sin ( l \theta )}
    \]
    After reading $ \dollar $ symbol, the final affine state is set to 
    \[
        v_{f} = \myvector{ \cos ( l \theta ) \\ t \sin ( l \theta ) \\ 1 - \cos ( l \theta ) -  t \sin ( l \theta )}.
    \]
    For members, $ v_f = v_0 $ and so the input is accepted with probability 1. For non-members, the ratio 
    \[
        \frac{|\cos ( l \theta )|}{|\sin ( l \theta )|} \geq 
        \frac{ \cos(\frac{\pi}{p}) }{ \sin(\frac{\pi}{p})  } = \cot (\frac{\pi}{p}),
    \]
    where the bound is obtained when
    \[ 
        l \equiv \dfrac{p-1}{2} \mod p ~~\mbox{ or }~~ 
        l \equiv \dfrac{p+1}{2} \mod p,
    \]
    i.e., the rotating vector is at its closest points to the $ x $-axis. Thus, the accepting probability is less than $ \dfrac{ \cot( \pi/p ) }{ t} $. \qed
\end{proof}

We close this section with a promise problem given in \cite{AmbY12}: For any $ k \in \mathbb{Z}^+$,
\[
    \mathtt{MOD2^k} = (\mathtt{0MOD2^k},\mathtt{1MOD2^k}),
\]
where $ \mathtt{0MOD2^k} = \{ a^{j\cdot 2 ^k} ~\mid~ j \equiv 0 \mod 2 \} $ and  $ \mathtt{1MOD2^k} = \{ a^{j\cdot 2 ^k} ~\mid~ j \equiv 1 \mod 2 \} $. This promise problem is solved by 2-state QFAs with zero error \cite{AmbY12}, and different types of classical automata require $ 2^{k+1} $ states to solve it \cite{GefY15}.

By using simulation in Section~\ref{sec:exact-QFA}, it was given in \cite{VY18} that 5-state AfA can solve this problem with zero error. We believe that 2-state AfAs cannot solve this problem with zero error. Here we give a 3-state AfAs with zero error.

\begin{theorem}
    For a given $ k \in \mathbb{Z}^+$, the promise problem $ \mathtt{0MOD2^k} $ is solved by an AfA with zero error.
\end{theorem}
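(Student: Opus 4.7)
The plan is to mimic the 3-state construction just used for $\mathtt{MOD}_p$, but with rotation angle $\theta = \pi/2^k$ on the two-dimensional subspace spanned by $s_1$ and $s_2$, while $s_3$ absorbs whatever mass is needed to keep the state affine. I would take $s_1$ as both the initial and the sole accepting state, set $A_\cent$ to the identity, and define
\[
A_a = \mymatrix{rrr}{ \cos\theta & -\sin\theta & 0 \\ \sin\theta & \cos\theta & 0 \\ 1-\cos\theta-\sin\theta & 1+\sin\theta-\cos\theta & 1 }.
\]
A routine induction (or an appeal to the generic embedding in Section~\ref{sec:linear}) then gives, after reading $a^l$,
\[
v_l = \myvector{\cos(l\theta) \\ \sin(l\theta) \\ 1-\cos(l\theta)-\sin(l\theta)}.
\]

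Next I would exploit the promise: any relevant input has length $l = j\cdot 2^k$, so $l\theta = j\pi$ and $v_l$ collapses to $((-1)^j,\, 0,\, 1-(-1)^j)^T$, which is $(1,0,0)^T$ when the input lies in $\mathtt{0MOD2^k}$ (that is, $j$ even) and $(-1,0,2)^T$ when the input lies in $\mathtt{1MOD2^k}$ (that is, $j$ odd). The remaining task is to design $A_\dollar$ so that the first of these pre-images is routed to an affine state with all mass on $s_1$, while the second is routed to an affine state with zero mass on $s_1$.

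The only nontrivial design step — and the main obstacle — is selecting $A_\dollar$. Its first column is forced to $(1,0,0)^T$ so that the $j$-even state is left fixed. For the $j$-odd state, $A_\dollar(-1,0,2)^T = -c_1 + 2c_3$ (where $c_i$ is the $i$-th column of $A_\dollar$), so demanding the first coordinate to vanish pins down $c_3[1] = 1/2$. A convenient affine completion is
\[
A_\dollar = \mymatrix{ccc}{1 & 0 & 1/2 \\ 0 & 1 & 1/2 \\ 0 & 0 & 0},
\]
whose columns sum to $1$. A direct check then shows $A_\dollar(1,0,0)^T = (1,0,0)^T$, giving $f_M(x) = 1$ for inputs in $\mathtt{0MOD2^k}$, and $A_\dollar(-1,0,2)^T = (0,1,0)^T$, giving $f_M(x) = 0$ for inputs in $\mathtt{1MOD2^k}$. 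This settles the zero-error claim.
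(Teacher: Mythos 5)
Your construction is correct and takes essentially the same route as the paper's own proof: both use the affine embedding of a rotation by $\pi/2^k$ (twice the QFA's angle) so that the two promise classes land on the distinct affine states $(1,0,0)^T$ and $(-1,0,2)^T$, which are then separated by an affine end-marker operator. The only (immaterial) difference is that your $A_\dollar$ routes the residual mass in the odd case to $s_2$, whereas the paper halves $s_3$ and adds the other half to $s_1$, sending $(-1,0,2)^T$ to $(0,0,1)^T$.
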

\begin{proof}
    The 2-state QFA algorithm uses a rotation with angle $ \dfrac{\pi}{2^{k+1}} $ on the unit circle \cite{AmbY12}:
    \[
    \myvector{1 \\ 0} \xrightarrow{2^{k}~symbols} \myvector{0 \\ 1} \xrightarrow{2^{k}~symbols}
    \myvector{-1 \\ 0}  \xrightarrow{2^{k}~symbols}
    \myvector{0 \\ -1 }  \xrightarrow{2^{k}~symbols}
    \myvector{1 \\ 0}.
    \]
    Thus, the outcomes alternates between the states ``0'' or ``1'' for each block of $ 2^k $ symbols. Remark that the measurement results for $ \myvector{1 \\ 0 } $ and $ \myvector{-1 \\ 0} $ are the same. But, if we use the simulation given in Section~\ref{sec:linear}, the affine states for members of $\mathtt{0MOD2^k}$ will be 
    \[
    \myvector{1 \\ 0 \\ 0} ~\mbox{ and }~
    \myvectorr{-1 \\ 0 \\ 2},
    \]
    which are different each other.
    
    Instead of the rotating with angle $ \dfrac{\pi}{2^{k+1}} $, we use a rotation with angle $ \dfrac{\pi}{2^{k}} $. Then, we will have the following cycle:
    \[
    \myvector{1 \\ 0} \xrightarrow{2^{k}~symbols} \myvector{-1 \\ 0} \xrightarrow{2^{k}~symbols}
    \myvector{1 \\ 0}   \xrightarrow{2^{k}~symbols}
    \myvector{-1 \\ 0}.
    \]
    Quantumly, we visit two states having the same statistics (i.e., they are identical and no measurement can separate them.). But, they are different vectors. Now, we use the simulation given in Section~\ref{sec:linear}, and so we have the following affine states before reading $\dollar$ symbol for the members of $ \mathtt{0MOD2^k} $ and for the members of $ \mathtt{1MOD2^k} $
    \[
    \myvector{1 \\ 0 \\ 0} ~\mbox{ and }~
    \myvectorr{-1 \\ 0 \\ 2},
    \]
    respectively. After reading $\dollar$ symbol, we half the value of $ s_3 $ and add the other half to the value of $ s_1 $ (such trick was used before in \cite{NKPVY17}). Then, these two affine states becomes
    \[
        \myvector{1 \\ 0 \\ 0} ~\mbox{ and }~
    \myvectorr{0 \\ 0 \\ 1},
    \]
    respectively. We make $ s_1 $ the only the accepting state, and so two different cases can be separated with zero error.
\qed\end{proof}

\section{Simulating NFAs}
\label{sec:NFA}

When the cutpoint is picked as 0, then the given PFA in Section~\ref{sec:exclusive} turns out to be a NFA, where each non-zero transition corresponds to a nondeterministic choice (transition). Thus, any succinctness result for NFAs can be obtained for AfAs having one more state with one-sided error bound $\frac{2}{3}$. (To obtain a better error bound, we can tensor a few copies of the same automaton, which increases the number of states polynomially.)

Here, we present a pedagogically easier construction and more importantly with arbitrarily small error bounds without increasing the previous state bound. We also show that if the number of accepting path(s) are the same (e.g., one) for each member, then the error is zero.

An NFA does not use end-markers but use $\varepsilon$-transition(s). By using the left end-marker, all $\varepsilon$-transition(s) without reading any symbol at the beginning of computation can be replaced with the transitions defined for the left end-marker. All other $\varepsilon$-transition(s) can also be removed by defining new transitions (without using any extra states). When using the right end-marker, NFAs may save at most one state, since any NFA using the right end-marker can be simulated by a NFA without the right end-marker by using one extra state: each transition going to an accepting state when reading the right end-marker goes to this new state, which will be the single accepting state. 

We represent the computation of an $n$-state NFA, say $N$, on a given input $ x \in \Sigma^* $ linearly, where $ |x| = l $ and $ n>1 $. We assume that $ N $ does not have any $\varepsilon$-transitions and it uses the left end-marker. We use integer-valued vectors to represent the states of $ N $ and zero-one matrices to represent the transitions of $ N $.

We assume that the set of states of $ N $ is $ S = \{ s_1,\ldots,s_n \} $ and $ s_1 $ is the initial state. Let $ S_a \subseteq S $ be the set of accepting state(s). The vector $ v_0 = (1~~0~~\cdots~~0)^T $ represents the initial ``nondeterministic'' state. For each symbol $ \sigma \in \Sigma \cup \{ \cent \} $, we define the ``nondeterministic'' operator $ A_\sigma $ where $ A_\sigma[j,i] $ is 1 if there is a transition from $ s_i $ to $ s_j $ when reading symbol $ \sigma $, and it is 0, otherwise. Thus, the final nondeterministic state of $ N $ on $ x $ can be calculated as
\[
    v_{l+1} = A_{x[l]} \cdots A_{x[1]} A_{\cent} v_0.
\]
Here $ v_{l+1} $ contains non-negative integers. A nice property of this presentation is that the value for $ s_i $ represents the number of nondeterministic path(s) ending in $ s_i $ at the end. Remark that some paths may be terminated before, which will not be counted on $ v_{l+1} $.

By using the construction in Section~\ref{sec:linear}, we can design an $ (n+1) $-state AfA $ M $ such that its affine state before reading the right end-marker is
\[
    v'_{l+1} = \myvector{ v_l[1] \\ \vdots \\ v_l[n] \\ \hline 1 - \zeta(v_l) }.
\]
Let $ \alpha = \sum_{s_i \in S_a} v_l[i] $, i.e., the summation of all entries corresponding to the accepting state(s) of $ N $. The AfA $ M $ maps $ v'_{l+1} $ to
\[
    v'_f = \myvector{ t\alpha \\ -t \alpha \\ 1 \\ 0 \\ \vdots \\ 0 }
\]
after reading the right end-marker for some $ t \in \mathbb{Z}^+ $. The accepting states of $ M $ are $ \{s_1,s_2\} $. If $ x \in L $, then $ \alpha $ is a positive integer and so $x$ is accepted with probability no less than $ \dfrac{2t}{2t+1} $, which means that the error can be at most $ \dfrac{1}{2t+1} $. If $ x \notin L $, then $ \alpha = 0 $ and so it is accepted with probability 0.

\begin{theorem}
    Let $ L $ be a language recognized by an $ n $-state NFA, where $ n >1 $. Then, $ L $ is also recognized by an $(n+1)$-state AfA with arbitrarily small (positive) one-sided error bound.
\end{theorem}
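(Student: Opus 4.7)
The plan is to follow the blueprint laid out in the paragraphs preceding the theorem statement. I would first argue that one may assume without loss of generality that the NFA $N$ uses the left end-marker and has no $\varepsilon$-transitions, as discussed earlier: the left end-marker absorbs all initial $\varepsilon$-transitions, remaining $\varepsilon$-transitions compile away without adding states, and the right end-marker costs at most one state (already accounted for in the bound $n+1$). With $N$ having states $\{s_1,\ldots,s_n\}$, initial state $s_1$, and accepting set $S_a$, I would encode the nondeterministic computation linearly by zero-one transition matrices $A_\sigma$ with $A_\sigma[j,i]=1$ iff $s_i \to s_j$ on $\sigma$. Applied to $v_0 = (1,0,\ldots,0)^T$, these matrices produce a vector $v_{l+1}$ whose $i$-th entry counts the number of paths of $N$ landing in $s_i$ after reading the input. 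In particular, $\alpha := \sum_{s_i \in S_a} v_{l+1}[i]$ is a positive integer for $x \in L$ and $0$ for $x \notin L$.

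Next I would invoke the generic embedding of Section~\ref{sec:linear}: applying it to $v_0$ and the family $\{A_\sigma\}$ yields an $(n+1)$-state AfA $M$ whose state just before reading the right end-marker is the affine vector $v'_{l+1}$ agreeing with $v_{l+1}$ on the first $n$ coordinates and with $1-\zeta(v_{l+1})$ in the $(n+1)$-st slot. This costs exactly one extra state, matching the bound in the theorem.

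The only remaining design is the right end-marker operator $A_\dollar$. For a free integer parameter $t \geq 1$, I would choose $A_\dollar$ so that $v'_f = A_\dollar v'_{l+1} = (t\alpha, -t\alpha, 1, 0, \ldots, 0)^T$: concretely, the columns indexed by accepting states of $N$ are $(t,-t,1,0,\ldots,0)^T$ and all other columns (including the ``defect'' column $n+1$) are $(0,0,1,0,\ldots,0)^T$. Declaring $\{s_1, s_2\}$ accepting in $M$, I would then read off $f_M(x) = 0$ for $x \notin L$ (the first two coordinates of $v'_f$ vanish) and $f_M(x) = 2t\alpha/(2t\alpha + 1) \geq 2t/(2t+1)$ for $x \in L$, so the one-sided error is at most $1/(2t+1)$, which tends to $0$ as $t \to \infty$.

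The main obstacle is purely a verification: I must check that $A_\dollar$ as specified is a valid affine operator, i.e., its columns sum to $1$ (which is immediate: accepting-state columns give $t - t + 1 = 1$ and all other columns give $1$), and that its action on $v'_{l+1}$ produces exactly $v'_f$ independent of the input. The latter uses the telescoping identity $c \cdot \zeta(v_{l+1}) + c(1 - \zeta(v_{l+1})) = c$, which forces the third coordinate of $v'_f$ to be identically $1$ regardless of the input. Beyond these bookkeeping checks, the theorem is essentially a direct consequence of Section~\ref{sec:linear} combined with the affine amplification step $\alpha \mapsto (t\alpha, -t\alpha)$ that sharpens the one-sided gap to an arbitrarily small error.
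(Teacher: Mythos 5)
Your proposal is correct and follows essentially the same route as the paper: the same linear path-counting encoding of the NFA, the same $(n+1)$-state affine embedding from Section~\ref{sec:simulations}, the same final map $v'_{l+1} \mapsto (t\alpha,\,-t\alpha,\,1,\,0,\ldots,0)^T$ with accepting states $\{s_1,s_2\}$, and the same error bound $\frac{1}{2t+1}$. The only difference is that you spell out the columns of $A_\dollar$ and verify they sum to $1$, which the paper leaves implicit.
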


Suppose that the NFA has a single accepting path for each member. Then, $ \alpha $ is 1 for the members and it is 0 for the non-members. Thus, we can design an zero-error AfA by setting the final affine state as
\[
    v'_f = \myvector{\alpha \\ 0 \\ 1-\alpha \\ 0 \\ \vdots \\ 0}
\]
after reading the right end-marker. It is easy to see that the accepting probability is 1 (resp., 0) for each member (resp., non-member).

\begin{theorem}
    \label{thm:NFA-single-path}
    Let $ L $ be a language recognized by an $ n $-state NFA such that each member is accepted on exactly one nondeterministic path, where $ n > 1 $. Then, $ L $ is also recognized by an $(n+1)$-state AfA with zero error.
\end{theorem}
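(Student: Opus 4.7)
The plan is to instantiate the construction outlined immediately above the theorem, specialised to the single-path setting. First, I would represent the $n$-state NFA $N$ linearly: take $v_0 = e_1 \in \mathbb{R}^n$ and zero-one transition matrices $\{A_\sigma\}_{\sigma \in \Sigma \cup \{\cent\}}$ with $A_\sigma[j,i] = 1$ iff $N$ has a transition from $s_i$ to $s_j$ on $\sigma$. Then the $i$-th entry of $v_{l+1} = A_{x[l]} \cdots A_{x[1]} A_\cent v_0$ counts the number of computations of $N$ on $x$ ending in $s_i$. By the single-path hypothesis, the scalar $\alpha = \sum_{s_i \in S_a} v_{l+1}[i]$ equals $1$ for $x \in L$ and $0$ for $x \notin L$.

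Next, I would lift these matrices to $(n+1)$-dimensional affine operators by the recipe of Section~\ref{sec:linear}, producing an $(n+1)$-state AfA $M$ whose state just before reading $\dollar$ is $v'_{l+1} = (v_{l+1}[1], \ldots, v_{l+1}[n],\, 1 - \zeta(v_{l+1}))^T$. The final and only non-trivial step is the affine operator $A_\dollar$: I would design it so that it sends every such $v'_{l+1}$ to the target vector $(\alpha,\, 0,\, 1-\alpha,\, 0,\ldots,0)^T$ identified in the discussion preceding the theorem. Declaring $s_1$ the unique accepting state of $M$ then yields acceptance probability $1$ on members and $0$ on non-members, i.e.\ zero error.

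Concretely, I would set $A_\dollar[1,i] = 1$ iff $s_i \in S_a$ (with $A_\dollar[1,n+1] = 0$), $A_\dollar[3,i] = 1$ iff $s_i \notin S_a$ (with $A_\dollar[3,n+1] = 1$), and every other entry to $0$. A short computation gives $(A_\dollar v'_{l+1})[1] = \alpha$ and $(A_\dollar v'_{l+1})[3] = (\zeta(v_{l+1}) - \alpha) + (1 - \zeta(v_{l+1})) = 1 - \alpha$. The only thing left to verify is that $A_\dollar$ is a genuine affine operator, which is immediate from its definition: for $i \le n$, column $i$ contains a single $1$ in row $1$ (if $s_i \in S_a$) or row $3$ (otherwise), and column $n+1$ contains a single $1$ in row $3$, so every column sums to $1$. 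There is no serious obstacle in the argument; the only subtle point worth flagging is that the single-path hypothesis is essential, since without it $\alpha$ could be a positive integer greater than $1$ and the accepting probability on members would drop to $\alpha/(2\alpha - 1) < 1$, which is precisely why the preceding theorem resorts to the $t$-scaled construction with tunable one-sided error rather than zero error.
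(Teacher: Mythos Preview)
Your proposal is correct and follows essentially the same approach as the paper: lift the NFA's zero-one transition matrices to $(n{+}1)$-dimensional affine operators via Section~\ref{sec:linear}, use the single-path hypothesis to guarantee $\alpha\in\{0,1\}$, and arrange the $\dollar$-operator so that the final affine state is $(\alpha,0,1-\alpha,0,\ldots,0)^T$ with $s_1$ the sole accepting state. You go slightly beyond the paper by exhibiting $A_\dollar$ explicitly and verifying it is column-affine, which the paper leaves implicit.
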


\begin{corollary}
    Let $ L $ be a language recognized by an $ n $-state NFA such that each member is accepted on exactly $k>1$ nondeterministic paths, where $ n > 1 $. Then, $ L $ is also recognized by an $(n+1)$-state AfA with zero error.
\end{corollary}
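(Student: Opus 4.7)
The plan is to adapt the construction in the proof of Theorem~\ref{thm:NFA-single-path} by introducing the scaling factor $1/k$ at the final step. As there, I would first simulate the NFA linearly via the construction of Section~\ref{sec:linear} to obtain an $(n+1)$-state AfA $M$ whose affine state just before reading $\dollar$ is
\[
    v'_{l+1} = \myvector{ v_l[1] \\ \vdots \\ v_l[n] \\ \hline 1 - \zeta(v_l) }.
\]
Writing $\alpha = \sum_{s_i \in S_a} v_l[i]$ for the total accepting mass, the hypothesis that every member is accepted on exactly $k$ nondeterministic paths yields $\alpha = k$ for $x \in L$ and $\alpha = 0$ for $x \notin L$.

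The key step is to design the affine operator $A_\dollar$ so that
\[
    v'_f = A_\dollar\, v'_{l+1} = \myvector{ \alpha/k \\ 0 \\ 1 - \alpha/k \\ 0 \\ \vdots \\ 0 }.
\]
A concrete choice is to place $1/k$ in row $1$ of every column indexed by an accepting state of $N$ (and $0$ elsewhere in row $1$), to place $1 - 1/k$ in row $3$ of those same columns and $1$ in row $3$ of all remaining columns (including column $n+1$), and to set every other entry to $0$. Each column then sums to $1$, so $A_\dollar$ is a valid affine operator, and since $\zeta(v'_{l+1}) = 1$ a short calculation confirms that $A_\dollar\, v'_{l+1}$ takes the displayed form.

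Declaring $s_1$ the unique accepting state of $M$ then gives $f_M(x) = 1$ for members (since $v'_f = e_1$) and $f_M(x) = 0$ for non-members (since $v'_f = e_3$), so $M$ recognizes $L$ with zero error using $n+1$ states. The only subtlety worth checking is the boundary case in which $s_3 \in S_a$: in that case column $3$ of $A_\dollar$ acquires $1/k$ in row $1$ and $1 - 1/k$ in row $3$, still summing to $1$. I do not anticipate any real obstacle; the corollary amounts to the observation that, whenever the number of accepting paths is a fixed constant $k$, the construction of Theorem~\ref{thm:NFA-single-path} goes through verbatim with $\alpha$ replaced by $\alpha/k$.
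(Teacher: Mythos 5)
Your proposal is correct and follows essentially the same route as the paper, which likewise takes the zero-error construction of Theorem~\ref{thm:NFA-single-path} and simply replaces the first and third entries of the final affine state by $\alpha/k$ and $1-\alpha/k$, so that members (with $\alpha=k$) land on $e_1$ and non-members (with $\alpha=0$) on $e_3$. The explicit description of $A_\dollar$ you give is a harmless elaboration the paper leaves implicit, and it checks out: each column sums to $1$ and the stated image of $v'_{l+1}$ is correct.
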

\begin{proof}
    The only modification in the above proof is on the final state as
    \[
    v'_f = \myvector{ \dfrac{\alpha}{k} \\ 0 \\ 1-\dfrac{\alpha}{k} \\ 0 \\ \vdots \\ 0}
    \]
    since $ \dfrac{\alpha}{k} $ is 1 for the members, and 0, otherwise.\qed
\end{proof}

Remark that zero-error PFAs and QFAs cannot be more succinct than DFAs \cite{Kla00}. Thus, zero-error AfAs can be exponentially more succinct than zero-error PFAs and QFAs due to the following witness languages.

The language $ \mathtt{MODXOR_k} $ \cite{IKPY21} is formed by the strings
\[
	\{0,1\}^{t} x_1\{0,1\}^{2k-1} x_2\{0,1\}^{2k-1} \cdots x_m \{0,1\}^{2k-1},
\]
where $ t<2k,m > 0 $, each $ x_i \in\{0,1\} $ for $ 1 \leq i \leq m $, and $ \bigoplus_{i=1}^m x_i =1  $. It was shown \cite{IKPY21} that $ \mathtt{MODXOR_k} $ for $ k>0 $ is recognized by a $ (2k+1) $-state AfA with zero error. Due to Theorem~\ref{thm:NFA-single-path}, the same results can also be obtained by designing a $ 2k $-state NFA, which accepts each member on a single path. 

Compared to $ \mathtt{MODXOR_k} $, the language $ \mathtt{END_k} = \{ \{0,1\}^*1\{0,1\}^{n-1} \} $ is much simpler, and we know that it is recognized by an $ n $-state NFA, which accepts each member on a single path, and any DFA (and so any zero-error QFA) requires at least $ 2^n $ states. 

\begin{corollary}
    The language $ \mathtt{END_k} $ is recognized by an $(n+1)$-state AfA with zero error.
\end{corollary}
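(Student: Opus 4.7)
The plan is to invoke Theorem~\ref{thm:NFA-single-path} by exhibiting an $n$-state NFA for $\mathtt{END_k}$ in which every member string has exactly one accepting computation. The canonical ``guess the position'' NFA for the language ``the $n$-th symbol from the right is $1$'' is the natural candidate, and I will argue that it is unambiguous on members.

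First I would define the NFA $N$ with states $S=\{s_1,\ldots,s_n\}$, initial state $s_1$, unique accepting state $s_n$, and transitions: $s_1$ loops on both $0$ and $1$; additionally, on reading $1$ in $s_1$ there is a nondeterministic transition to $s_2$; and for each $2\le i\le n-1$ and each $\sigma\in\{0,1\}$, there is a transition from $s_i$ to $s_{i+1}$ on $\sigma$. One checks that $N$ accepts exactly those strings whose $n$-th-to-last symbol is $1$, i.e. $\mathtt{END_k}$, and uses exactly $n$ states (it is deterministic outside $s_1$ and has no $\varepsilon$-transitions, so the preparatory reductions of Section~\ref{sec:NFA} are unnecessary).

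Next I would verify the single-path property. A computation that reaches $s_n$ after reading $x$ of length $\ell\ge n$ must have taken the $s_1\to s_2$ transition at some step $j$, then performed $n-1$ deterministic moves through $s_2,\ldots,s_n$; so it reaches $s_n$ exactly when $j=\ell-n+1$ and $x[j]=1$. Hence a member $x$ admits one accepting path, determined uniquely by the (necessarily $1$) symbol at position $\ell-n+1$; distinct guesses simply fail to reach $s_n$ at the end of input.

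Finally, having confirmed that $N$ is an $n$-state NFA for $\mathtt{END_k}$ with exactly one accepting path per member, I would apply Theorem~\ref{thm:NFA-single-path} verbatim to obtain an $(n+1)$-state AfA that recognizes $\mathtt{END_k}$ with zero error. The only delicate point in the whole argument is unambiguity: one must ensure the loop on $s_1$ does not produce a second successful guess, which is why the $s_1\to s_2$ transition is restricted to the symbol $1$ — if it fired on $0$ as well, a member containing multiple $1$s far from the right end could still yield unique acceptance, but this minimal restriction makes the invariant transparent and avoids any case analysis.
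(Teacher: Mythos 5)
Your overall strategy --- exhibit an unambiguous $n$-state NFA for $\mathtt{END_k}$ and invoke Theorem~\ref{thm:NFA-single-path} --- is exactly what the paper does (the paper gives no further detail, merely asserting that such an NFA exists). However, your concrete automaton is off by one. In your NFA the chain of forced moves after the guess is $s_2\to s_3\to\cdots\to s_n$, which consumes only $n-2$ symbols; together with the guessed symbol itself, an accepting computation reads exactly $n-1$ symbols from the guess position to the end of the input. Hence a string $x$ of length $\ell$ is accepted iff $x[\ell-n+2]=1$, i.e., your automaton recognizes $\{0,1\}^*1\{0,1\}^{n-2}$ rather than $\{0,1\}^*1\{0,1\}^{n-1}$ as defined; for instance it accepts the length-$(n-1)$ string $10^{n-2}$, which is not in $\mathtt{END_k}$. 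Your own phrase ``performed $n-1$ deterministic moves through $s_2,\ldots,s_n$'' is inconsistent with the transitions you defined, since that chain has only $n-2$ edges. The fix is routine (lengthen the tail by one state, or re-index the language), but note that the honest state count for the standard unambiguous NFA of $\{0,1\}^*1\{0,1\}^{n-1}$ is $n+1$, so either the corollary's bound or the exponent in the definition of $\mathtt{END_k}$ must absorb the discrepancy --- a looseness already present in the paper's own wording.

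A smaller point: your closing paragraph attributes unambiguity to restricting the $s_1\to s_2$ transition to the symbol $1$. That is a misdiagnosis. Unambiguity comes from the rigidity of the tail: $s_n$ has no outgoing transitions and the chain has fixed length, so at most one guess position can land in $s_n$ exactly when the input ends, regardless of which symbols trigger the guess. Restricting the guess to $1$ is needed for the \emph{language} to be correct (otherwise every sufficiently long string would be accepted), not for the single-accepting-path property. The application of Theorem~\ref{thm:NFA-single-path} itself is fine once the automaton is corrected.
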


\section*{Acknowledgements}
Yakary{\i}lmaz was partially supported by the ERDF project Nr. 1.1.1.5/19/A/005 ``Quantum computers with constant memory''. 

\bibliographystyle{splncs04}
\bibliography{tcs}

\end{document}